\newcommand\numberthis{\addtocounter{equation}{1}\tag{\theequation}}
\DeclareMathAlphabet{\pazocal}{OMS}{zplm}{m}{n}
\newcommand\reallywidehat[1]{%
\savestack{\tmpbox}{\stretchto{%
  \scaleto{%
    \scalerel*[\widthof{\ensuremath{#1}}]{\kern.1pt\mathchar"0362\kern.1pt}%
    {\rule{0ex}{\textheight}}
  }{\textheight}%
}{2.4ex}}%
\stackon[-6.9pt]{#1}{\tmpbox}%
}
\newcommand{\F}{\mathbb{F}}
\newcommand{\Fq}{\F_q}
\renewcommand{\vec}[1]{\mathbf{#1}}
\newcommand{\bv}{\vec{b}}
\newcommand{\cv}{\vec{c}}
\renewcommand{\ev}{\vec{e}}
\newcommand{\sv}{\vec{s}}
\newcommand{\tv}{\vec{t}}
\newcommand{\xv}{\vec{x}}
\newcommand{\yv}{\vec{y}}
\newcommand{\un}{\vec{1}}
\newcommand{\Hm}{\vec{H}}
\newcommand{\Sm}{\vec{S}}
\newcommand\wt[1]{\abs{\vec{#1}}}
\newcommand*{\transp}{{\mathpalette\@transpose{}}}
\newcommand*{\@transpose}[2]{\raisebox{\depth}{$\m@th#1\intercal$}}
\newcommand*{\eqdef}{\stackrel{\text{def}}{=}}
\newcommand{\Hmpi}{{\mathbf{H}_{\pi}}}
\newcommand{\zo}{\{0,1\}}
\newcommand\bibalias[2]{%
  \@namedef{bibali@#1}{#2}%
}
\newtoks\biba@toks
\newcommand\acite[2][]{%
  \biba@toks{\cite#1}%
  \def\biba@comma{}%
  \def\biba@all{}%
  \@for\biba@one:=#2\do{%
    \@ifundefined{bibali@\biba@one}{%
      \edef\biba@all{\biba@all\biba@comma\biba@one}%
    }{%
      \PackageInfo{bibalias}{%
        Replacing citation `\biba@one' with `\@nameuse{bibali@\biba@one}'
      }%
      \edef\biba@all{\biba@all\biba@comma\@nameuse{bibali@\biba@one}}%
    }%
    \def\biba@comma{,}%
  }%
  \edef\biba@tmp{\the\biba@toks{\biba@all}}%
  \biba@tmp
}
\newcommand{\ie}{\textit{i.e.}}
\newcommand{\SD}{\textrm{SD}}
\newcommand{\CMSD}{\textrm{CMSD}}
\newcommand{\COMMENT}[1]{}
\renewcommand{\aa}{\mathcal{A}}
\renewcommand{\wt}[1]{{wt}\left(#1\right)}
\newcommand{\I}{\mathcal{I}}
\newcommand{\zerov}{\mathbf{0}}
\title{Classical and Quantum algorithms for generic Syndrome Decoding problems and applications to the Lee metric}
\author{Andr\'e Chailloux \inst{1} \and Thomas Debris-Alazard\inst{2} \and Simona Etinski \inst{1}}
\institute{Inria de Paris, EPI COSMIQ \and Inria Saclay\\
	\email{\{andre.chailloux, thomas.debris, simona.etinski\}@inria.fr}}
\begin{document}
	
\maketitle 
		
{\abstract{The security of code-based cryptography usually relies on the hardness of the syndrome decoding (SD) problem for the Hamming weight. The best generic algorithms are all improvements of an old algorithm by Prange, and they are known under the name of Information Set Decoding (ISD) algorithms. This work aims to extend ISD algorithms' scope by changing the underlying weight function and alphabet size of SD. More precisely, we show how to use Wagner's algorithm in the ISD framework to solve SD for a wide range of weight functions. We also calculate the asymptotic complexities of ISD algorithms both for the classical and quantum case. We then apply our results to the Lee metric, which currently receives a significant amount of attention. By providing the parameters of SD for which decoding in the Lee weight seems to be the hardest, our study could have several applications for designing code-based cryptosystems and their security analysis, especially against quantum adversaries.}}
	
\section{Introduction}	
	
Code-based cryptography is one of the leading proposals for post-quantum cryptography,
and it traditionally relies on the hardness of the syndrome decoding problem. For fixed $q, n,k,w$, the problem is defined as follows: starting from a parity check matrix $\Hm \in \F_q^{n\times(n-k)}$, and a syndrome $\sv \in \F_q^{n-k}$, the goal is to find a vector $\ev \in \F_q^n$ such that $\Hm \ev = \sv$, and $\ev$ has the Hamming weight\footnote{The Hamming weight of a vector $\ev = (e_1,\dots,e_n)$ is $|\ev|_{\textup{H}} \eqdef |\{i : e_i \neq 0\}|$.} $w$. This problem has been studied for a long time, and mostly for the alphabet size $q = 2$. Despite many efforts, the best algorithms for solving this problem \cite{P62,S88,D91,B97b,MMT11,BJMM12,MO15}  require an exponential running time, and they are all refinements of the original Prange's algorithm \cite{P62}. As such, they are all commonly known under one name: Information Set Decoding (ISD) algorithms.

It  is, however, notoriously difficult to put the syndrome decoding problem into practice. For example, constructing an efficient signature scheme in code-based cryptography often requires utilizing pseudo-random functions, and some other cryptographic assumptions. A generalized version of the problem promises to be harder and to offer
a more exploitable structure that leads to creating more efficient constructions. Like DURANDAL [ABG+19], some proposals replace the Hamming weight with the rank metric based weight, which allows designers to use a Schnorr-Lyubashevski type signature. Another proposal,  $\mathsf{WAVE}$ signature scheme \cite{DST19a}, utilizes syndrome decoding problem for which $q = 3$, and the Hamming weight is large. This further enables one to construct and exploit a trapdoor one-way preimage sampleable function, which would not be possible for $q = 2$ or $q = 3$ in small weight.
	
These examples already show the usefulness of going beyond $q=2$ and the Hamming weight setting. We are, however, still at an early stage of using these variants for cryptographic schemes. Therefore, it is important to study their hardness, especially against quantum computers, since a big appeal of code-based cryptography is post-quantum security.
		
\paragraph{Our work.} In this paper, we perform a generic analysis of different ISD algorithms. The analysis is applicable to any weight function $wt : \F_{q}^{n} \rightarrow \mathbb{R}_+$ satisfying $wt(\ev) \eqdef \sum_{i = 1}^n wt’(e_i)$, for some function $wt’ : \F_q \rightarrow \mathbb{R}_+$, and $wt’(0) = 0.$ However, we primarily focus on the Lee weight analysis, and the comparison between the Lee and Hamming weight. The reason we chose these two weight functions is that the two are commonly encountered in coding theory, and both led to proposals for cryptographic schemes.

Which ISD algorithms do we study here? We analyze algorithms by Prange and Stern/Dumer and the ISD algorithms based on Wagner's approach to solving a Generalized Birthday problem \cite{Wag02}. Starting from \cite{BCDL19}, where classical algorithms for a ternary alphabet and the Hamming weight were analyzed, we broaden the analysis to the higher alphabet sizes, usage of a different weight function, and the study of both classical and quantum algorithms. This is the first time such a generic analysis of quantum ISD algorithms was done since the work of \cite{KT17a} that studied only the standard case of $q = 2$ and the Hamming weight.

To perform such a generic analysis, we need a way of computing sphere surface areas in a vector space endowed with an arbitrary metric. More precisely, we aim to calculate the sizes of  sets of the form $\{\ev \in \F_q^n : wt(\ev) = p\}$. To do this, we start with the approach presented in \cite{A84a}, applied to the Lee metric case, and we derive a convex optimization method for calculating the asymptotic sphere surface area independently of the metric. We thus provided a simple approach to analyzing syndrome decoding problems in a vector space endowed with an arbitrary metric and a weight function derived from it. 

Our framework can also be used for studying the security of the Restricted Syndrome Decoding problem \cite{BBCHPSW2020}. Nevertheless, it does not work for the rank metric norm where we do not know how to construct ISD algorithms better than Prange’s algorithm\footnote{There are other algorithms \cite{BBBGNRT20,BBCGPSTV20} based on Gr\"obner basis that perform better than ISD algorithms for the rank metric.}.
	
\section*{Notations}

Throughout the paper, we use $[n] \eqdef \{1,\dots,n\}$ and, given a finite set $\mathcal{E}$, we denote by $|\mathcal{E}|$ its size. We consider a weight function $wt : \F_{q}^{n} \rightarrow \mathbb{R}_+$ which satisfies the following:
	
\begin{multline}\label{Constriant:WeightConstraint}
	\exists wt' : \F_q \rightarrow \mathbb{R}_+ \mbox{ : }  wt'(0) = 0 \mbox{ and } \forall \ev = (e_1,\dots,e_n) \in \F_q^n, \wt{\ev} = \sum_i wt'(e_i).
\end{multline}

This weight function is usually - but not always - obtained as $wt(\vec{x}) = d(\xv,0)$ where $d$ is a distance. We will sometimes use the terminology of distance instead of weight when this is the case. When $q$ and $wt$ are fixed and explicit, we define the surface area  of a sphere of weight $w$ in a vector space of dimension $n$ as:
$$
S^n_w \eqdef \left| \{\ev \in \F_q^n : \wt{\ev} = w\} \right|. $$ 

\section{Quantum preliminaries}\label{Section:AppendixQuantumPreliminaries} 

We refer to \cite{NC00} for a basic introduction to quantum computing. 
In this paper, we use the canonical gate model where the running time of a quantum algorithm is the number of gates in its corresponding circuit description. We utilize the QRAM model, for which we assume the operation $U_{QRAM} : \ket{i}\ket{y}\ket{b_1,\dots,b_n} \rightarrow \ket{i}\ket{y + x_i}\ket{b_1,\dots,b_n}$ can be done in time $polylog(n)$ when each $b_i$ is a single bit.

\paragraph{Grover's algorithm. \cite{Gro96}} For a function $f : \zo^n\rightarrow \zo$ that has an efficient classical description, Grover's algorithm can find $x$ such $f(x) = 1$ in time $O(poly(n) 2^{n/2})$ if such an $x$ exists and output 'no solution' otherwise.

\paragraph{Amplitude amplification. \cite{BH97}} Fix a function $f : \zo^n\rightarrow \zo$ that has an efficient classical description. Consider then a quantum algorithm $\aa$ that outputs $x$ such that $f(x) = 1$ with probability $p$ and does not perform intermediate quantum measurements. Using amplitude amplification, one can find $x$ such that $f(x) = 1$ by making $O(\frac{1}{\sqrt{p}})$ calls to $\aa$. Notice that if we start from a classical algorithm $\aa$, there are generic ways to run $\aa$ coherently as a quantum algorithm $\aa'$ that does not have intermediate quantum measurements and behaves exactly like $\aa$.
	
\section{Syndrome Decoding Problems}\label{Section:SyndromeDecodingProblems}

When we fix an alphabet size $q$ and a weight function $wt$, the syndrome decoding problem is defined as follows:
	
\begin{problem} Syndrome Decoding $\SD(n,k \le n,w)$
	\begin{itemize}
		\item Input: A matrix $\Hm \in \F_q^{(n-k)\times n}$, a column vector (the syndrome) $\vec{s} \in \F_q^{n-k}$. 
		\item Goal: Find a column vector $\vec{e} \in \F_q^n$ s.t. $\Hm \ev = \sv$ and $\wt{\ev} = w$.
	\end{itemize}
\end{problem}
	
The decision version of this problem, which asks whether there exists a vector $\ev$ of weight $w$ such that $\Hm \ev = \sv$, is $\NP$-complete for $q = 2$ with the Hamming weight function \cite{BMT78}. 
	
Consider now the input distribution $\mathcal{D}$ sampled as follows: pick a random matrix $\Hm \in \F_q^{(n-k) \times n}$ of  rank $n-k$, pick a random $\ev \in \F_q^n$ with $wt(\ev) = w$, and output $(\Hm,\sv = \Hm \ev)$. Notice that the problem always has at least one solution for this distribution and that $\SD$ is believed to be hard, even against quantum computers. That is why, in this paper, we study algorithms for $\SD$ with this input distribution. We only consider a prime $q$ to avoid attacks that would use sub-fields of the alphabet field $\F_q$.
	
Another problem of interest, which we call Checkable Multiple Syndrome Decoding, is the following: 
\begin{problem} Checkable Multiple Syndrome Decoding $\CMSD(n, m, w, Y, Z)$
	\begin{itemize}
		\item Input: A matrix $\Hm \in \F_q^{m \times n}$, a syndrome $\vec{s} \in \F_q^{m}$.
		\item Goal: output the description of a function $f : [Y] \rightarrow \F_q^n$ such that $f$ is efficiently computable, and $|\{\ev : \ev \in \textup{Im}(f), \Hm \ev = \sv \text{ and  }\wt{\ev} = w\}| = Z$.
	\end{itemize}
\end{problem}
	
This problem is a bit funny looking at first sight, but we are interested in it because, in our framework, it is used as a building block for solving the generic $\SD$ problem. It is very similar to asking for $Z$ solutions to the syndrome decoding problem. Indeed, from a description $f$, one can output $Z$ solutions to $\SD$ in time $Y$ by enumerating all the $f(1),\dots,f(Y)$. Reciprocally, if one can find $Z$ solutions $\ev_1,\dots,\ev_Z$ to $\SD(n,m,w)$ in time $T \ge Z$, then one can solve $\CMSD(n, m, w, Y, Z)$ by defining $f(i) = \ev_i$.
	
In the quantum setting, we want to have access to the function $f$ but without paying for a time cost of $Z$ for writing down these solutions. That will allow us to search over solutions more efficiently, using Grover's algorithm, and also justifies the slightly odd definition. Another remark is that while $f$ should be efficiently computable, it need not have an efficient description. Typically, $f$ can store some large precomputed databases, but computing $f(x)$ will only query the database a small number of times. 
	
\section{Information Set Decoding Algorithms for any Metric}

We present Information Set Decoding algorithms for $\SD$, which consist of a partial Gaussian elimination followed by solving an instance of $\CMSD$. The description here is essentially the one from \cite{BCDL19} with the difference that here we use the $\CMSD$ problem.
	
\subsection{Information Set Decoding Framework} \label{Section:InformationSetDecodingFramework}

Fix $\Hm\in\F_{q}^{(n-k)\times n}$ of rank $(n-k)$ and $\sv\in\F_{q}^{n-k}$. Recall that we want to find $\ev\in\F_{q}^n$ such that $\wt{\ev}= w$ and $\Hm\ev=\sv$. Let us introduce $\ell, p, Y,$ and $Z$, four parameters of the system that we consider fixed for now. In this framework, an algorithm for solving $\SD(n,k,w)$ consists of $4$ steps: a permutation step, a partial Gaussian Elimination step, a $\CMSD$ step, and a test step.
	
\begin{enumerate}
	\item \emph{Permutation step.} Pick a random permutation $\pi$. Let $\Hmpi$ be the matrix $\Hm$ with the columns permuted according to $\pi$. We now want to solve $\SD(n,k,w)$ on inputs $\Hmpi$ and $\sv$.
		
	\item \emph{Partial Gaussian Elimination step.} If the top left square submatrix of $\Hmpi$ of size $n-k-\ell$ is not of full rank, go back to step $1$ and choose another random permutation $\pi$. That happens with constant probability.\footnote{For $q = 2$, this happens with probability at least $0.288$ and this probability increases as $q$ increases (see \cite{Coo00}, for example).} If the submatrix is of full rank, perform Gaussian elimination on the rows of $\Hmpi$ using the first $n-k-\ell$ columns. Let now $\Sm \in \F_q^{(n-k)\times (n-k)}$ be the invertible matrix corresponding to this operation. There are two matrices then, $\Hm' \in \F_q^{(n-k-\ell) \times (k+\ell)}$ and $\Hm'' \in \F_q^{\ell \times (k+\ell)}$, such that:
		
	\[
	\Sm\Hmpi = 
	\begin{pmatrix} 
		\un_{n-k-\ell} & \Hm' \\ 
		\mathbf{0} & \Hm''
	\end{pmatrix}.
	\]
		
	A vector $\ev \in \F_q^n$ can be written as $\ev = \begin{pmatrix} \ev' \\ \ev'' \end{pmatrix}$, where $\ev' \in \F_q^{n-k-\ell}$ and $\ev'' \in \F_q^{k+\ell}$, and one can write ${\Sm}\sv = \begin{pmatrix} \sv' \\ \sv'' \end{pmatrix}$, with $\sv' \in \F_q^{n-k-\ell}$ and $\sv'' \in \F_q^{\ell}$.
		
    \begin{align*} 
    	\Hmpi{\ev} = {\sv} 
    	& \iff \Sm\Hm_\pi {\ev} = \Sm {\sv} \\
    	& \iff 
    	\begin{pmatrix} 
    		\un_{n-k-\ell} & \Hm' \\ 
    		\mathbf{0} & \Hm''
    	\end{pmatrix} 		
    	\begin{pmatrix} 
    		{\ev'} \\
    		{\ev''} 
    	\end{pmatrix} 
    	= 
    	\begin{pmatrix} 
    		{\sv'} \\
    		{\sv''} 
    	\end{pmatrix} \\ 
    	&\iff 
    	\left\{
    	\begin{array}{ll} 
    		{\ev'} + \Hm' {\ev''} = {\sv'} \\ 
    		\Hm''{\ev''} = {\sv''}
    	\end{array} 
    	\right. 
    	\numberthis 
    	\label{EQ1}
    \end{align*} 
		
	To solve the problem, we try to find a solution $\begin{pmatrix} \ev' \\ \ev'' \end{pmatrix}$ to the above system such that $\wt{\ev''} = p$ and $\wt{\ev'} = w-p$.
		
	\item \emph{The $\textup{\CMSD}$ step.} Solve $\CMSD(k+\ell,\ell,p,Y,Z)$ on input $(\Hm'',\sv'')$, and let $f$ be the output function. 
		
	\item \emph{The test step.} For each $i \in [Y]$, let $\ev''_i =f(i)$ and let $ {\ev'_i} = {\sv'} - \Hm'{\ev''_i}$. For each $i$ such that $\Hm''\ev'' = \sv''$, Equation \eqref{EQ1} ensures that $\Hmpi \begin{pmatrix} \ev'_i\\ \ev''_i \end{pmatrix} = {\sv}$.  If $\wt{\ev''_i} = p$ and $\wt{\ev'_i} = w-p$, $\ev_i = \begin{pmatrix} \ev'_i\\ \ev''_i \end{pmatrix}$ is therefore a solution to $\SD(n,k,w)$ on inputs $\Hmpi$ and $\sv$. The solution to $\SD(n,k,w)$ can then be turned into a solution of the initial problem by permuting the indices, as detailed in Equation \eqref{eq:permutation} below. If we do not find any solution after checking all $i \in [Y]$, we go back to step $1$. 

\end{enumerate} 
	
At the end of the protocol, we have a vector $\ev$ such that $\Hmpi \ev = \sv$ and $\wt{\ev} = w$. Let $\ev_{\pi^{-1}}$ be the vector $\ev$ with the permuted coordinates according to $\pi^{-1}$. Hence,

\begin{equation}
	\Hm {\ev}_{\pi^{-1}} = \Hmpi {\ev} = {\sv} 
	\quad \textrm{ and } \quad 
	\wt{\ev_{\pi^{-1}}} = \wt{\ev} =  w.
	\label{eq:permutation}
\end{equation}
Therefore, $\ev_{\pi^{-1}}$ is a solution to the problem.
	
\subsection{Information Set Decoding: Complexity Analysis (Classical and Quantum)}

We fix $q$ and a weight function $wt$. Recall that for any $n$ and $w$, the surface area of a sphere (according to $wt$) of radius $w$ in $\Fq^{n}$ is defined as:	
$$S^{n}_w = |\{\ev \in \F_q^n: \wt{\ev} = w\}|.$$

With this definition at hand, we now present the complexity analysis of the algorithm for solving $\SD(n,k,w)$ for fixed parameters $\ell,p,Y,Z$ (see section \ref{Section:InformationSetDecodingFramework} for more details).
	
\begin{lemma}\label{Lemma:Probability1}
	Let $P_1$ be the probability that at step $4$, for a fixed $i$, $\wt{\ev'_i} = w - p$. We have: 
$$P_1 = \min\{1,O(\frac{S^{n-k-\ell}_{w-p}}{\max\{1,\min\{S^n_w q^{-\ell},q^{n-k-\ell}\}\}})\}.$$
\end{lemma}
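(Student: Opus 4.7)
The plan is to compute $P_1$ by a ratio-of-counts argument, separating the contribution of the problem's guaranteed secret from a generic ``random'' contribution, and expecting the resulting expression to collapse into the three regimes visible in the claim. First I would make the dependence of $\ev'_i$ on $\ev''_i$ explicit. Writing the permuted secret after Gaussian elimination as $\ev^*_\pi = \binom{\av}{\bv}$, with $\av \in \F_q^{n-k-\ell}$, $\bv \in \F_q^{k+\ell}$, and $\wt{\av} + \wt{\bv} = w$, we have $\sv' = \av + \Hm' \bv$ and $\sv'' = \Hm'' \bv$. Substituting into $\ev'_i = \sv' - \Hm' \ev''_i$ gives $\ev'_i = \av + \Hm'(\bv - \ev''_i)$, exposing the key dichotomy: for $\ev''_i = \bv$, $\ev'_i = \av$; for $\ev''_i \neq \bv$, the randomness of $\Hm'$ makes $\Hm'(\bv - \ev''_i)$, and hence $\ev'_i$, uniform on $\F_q^{n-k-\ell}$.

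Next I would introduce
\[N_{\textup{cand}} = |\{\ev''\in\F_q^{k+\ell}:\Hm''\ev''=\sv'',\;\wt{\ev''}=p\}|,\quad N_{\textup{good}} = |\{\ev'' \in \F_q^{k+\ell} : \Hm''\ev'' = \sv'',\; \wt{\ev''} = p,\; \wt{\sv' - \Hm'\ev''} = w-p\}|,\]
and invoke the standard ISD heuristic that the $\CMSD$ step samples $\ev''_i$ essentially uniformly from the candidate set, so $P_1 \approx \esp[N_{\textup{good}}]/\esp[N_{\textup{cand}}]$. Each expectation decomposes as the sum of a random contribution (treating $\Hm'$ and $\Hm''$ as uniform on inputs $\neq \bv$) and a structured contribution from $\ev^*$. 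Using $\Prob[\wt{\bv}=p] = S^{n-k-\ell}_{w-p}\,S^{k+\ell}_p / S^n_w$ for $\ev^*$ uniform of weight $w$, this yields
\[\esp[N_{\textup{cand}}]\approx \frac{S^{k+\ell}_p}{q^\ell} + \frac{S^{n-k-\ell}_{w-p}\,S^{k+\ell}_p}{S^n_w},\quad \esp[N_{\textup{good}}] \approx \frac{S^{n-k-\ell}_{w-p}\,S^{k+\ell}_p}{q^{n-k}} + \frac{S^{n-k-\ell}_{w-p}\,S^{k+\ell}_p}{S^n_w}.\]

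Finally I would replace each sum by its maximum term; the common factor $S^{k+\ell}_p$ cancels from numerator and denominator, and a short case analysis on whether $S^n_w$ exceeds $q^{n-k}$ (and whether $S^n_w q^{-\ell}$ exceeds $1$) collapses the ratio into $O(S^{n-k-\ell}_{w-p}/\max\{1,\min\{S^n_w q^{-\ell}, q^{n-k-\ell}\}\})$. The outer $\min\{1,\cdot\}$ is then just the trivial bound $P_1 \le 1$. The hard part will be justifying the heuristic distribution assumptions rigorously: that after random permutation and Gaussian elimination the submatrices $\Hm'$ and $\Hm''$ act as independent uniform matrices on inputs distinct from $\bv$, and that $\esp[N_{\textup{good}}]/\esp[N_{\textup{cand}}]$ faithfully tracks $P_1$ despite the correlation between numerator and denominator. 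Both are standard working assumptions in ISD analysis; a rigorous verification would rely on concentration bounds for $N_{\textup{cand}}$ that hold in the non-degenerate regimes.
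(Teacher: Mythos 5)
Your argument is correct and lands on the same formula, but via a genuinely different decomposition from the paper's. The paper works with three sets living in $\F_q^n$: the true solution set $S=\{\ev:\wt{\ev}=w,\ \Hm_\pi\ev=\sv\}$, the superset $S_2=\{\ev=\binom{\ev'}{\ev''}:\wt{\ev}=w,\ \Hm''\ev''=\sv''\}$, and $T_i$, the set of weight-$w$ completions of the fixed bottom block $\ev''_i$. It then applies the birthday-style heuristic $P_1\approx \min\{1,|T_i||S|/|S_2|\}$ with each size replaced by its average, $|S|\approx\max\{1,S^n_w q^{-(n-k)}\}$, $|S_2|\approx\max\{1,S^n_w q^{-\ell}\}$, $|T_i|\approx S^{n-k-\ell}_{w-p}$; the $\max\{1,\cdot\}$ terms absorb the planted solution. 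You instead count directly over candidate bottom blocks $\ev''$, splitting $N_{\textup{cand}}$ and $N_{\textup{good}}$ into a planted part (the event $\ev''=\bv$, weighted by $\Pr[\wt{\bv}=p]$) and a generic part, and form the ratio of expectations. After the common $S^{k+\ell}_p$ cancels, a case analysis recovers the paper's three regimes; a pleasant feature of your route is that the planted term sits in both numerator and denominator, so the $\min\{1,\cdot\}$ cap emerges automatically (once $S^{n-k-\ell}_{w-p}>S^n_w q^{-\ell}$ the planted contribution dominates the denominator and clamps the ratio at $\Theta(1)$), whereas the paper bolts it on separately. Your version also makes the working heuristics explicit — uniformity of $\Hm'(\bv-\ev'')$ for $\ev''\neq\bv$, concentration of $N_{\textup{cand}}$, and replacing $\esp[N_{\textup{good}}/N_{\textup{cand}}]$ by $\esp[N_{\textup{good}}]/\esp[N_{\textup{cand}}]$; the paper accepts precisely the same heuristics under the phrase ``average size,'' so you are not behind it in rigor, only more transparent about where the hand-waving lives.
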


This lemma can be seen as a generalization of Proposition 2 of \cite{BCDL19} (where a max was omitted) for any weight function. 

\begin{proof}
Let $S = \{\ev : \wt{\ev} = w \wedge \Hm_\pi \ev = \sv\}$ be the set of solutions to our syndrome decoding problem on input $\Hm_\pi,\sv$. Let also $S_2 = \{\ev = \begin{pmatrix} \ev' \\ \ev'' \end{pmatrix} : \wt{\ev} = w \wedge \Hm'' \ev'' = \sv''\}$, where $\Hm''$ is the matrix from step $2$. By definition, $S \subseteq S_2$, so we have that $S$ has average size $\max\{1,S^n_wq^{-(n-k)}\}$ and $S_2$ has average size $\max\{S^n_w q^{-\ell},1\}.$
	
Fix $i$ and $\ev''_i = f(i)$ satisfying $\Hm'' \ev''_i = \sv''$ and $\wt{\ev''_i} = p$. $T_i =  \left\{\ev_i = \begin{pmatrix} \ev'_i \\ \ev''_i \end{pmatrix} : \wt{\ev_i} = w\right\}$. $T_i$ is of average size $S^{n-k-\ell}_{w-p}$. Step $4$ will find a solution if $T_i \cap S \neq \emptyset$. Since $T_i \subseteq S_2$ and is uniformly distributed in this set, this happens with the following probability: 
	
\begin{align*}
	P_1 & = \min\{1,O(\frac{|T_i||S|}{|S_2|})\} =  \min\{1,O(\frac{S^{n-k-\ell}_{w-p} \cdot \max\{1,S^n_w q^{-(n-k)}\}}{\max\{S^n_w q^{-\ell},1\}})\} \\
	&  = \min\{1,O(\frac{S^{n-k-\ell}_{w-p}}{\max\{1,\min\{S^n_w q^{-\ell},q^{n-k-\ell}\}\}})\}.
\end{align*}

\end{proof}

We now present our generic formula for the running time of the Information Set Decoding algorithm from Section \ref{Section:InformationSetDecodingFramework}. 

\begin{proposition}\label{Proposition:ISDC}
	Fix parameters $\ell,p,Y$, and $Z$ of the information set decoding algorithm. The classical running time of the algorithm, $T_{\textup{ISD}}$, is given as: 
$$ T_{\textup{ISD}} = O\left(\max\left\{1,\frac{1}{P_1 Z}\right\} \cdot\left(\textup{poly}(n) + T_{\textup{CMSD}} + \textup{poly}(n) Y \right)\right),$$
where $P_1$ is the probability from the above lemma, and $T_{\textup{CMSD}}$ is the running time of step $3$, {\ie }, the time required for solving $\textup{\CMSD}(k+\ell,\ell,p,Y,Z)$.
\end{proposition}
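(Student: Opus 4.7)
The plan is to write $T_{\textup{ISD}}$ as the cost of a single pass through steps~1--4 multiplied by the expected number of passes needed before a solution is output.

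For one pass: steps~1 and~2 run in $\textup{poly}(n)$ time, since the full-rank test on the $(n-k-\ell)\times(n-k-\ell)$ submatrix succeeds with constant probability, so in expectation only a constant number of re-drawings of $\pi$ are required. Step~3 takes $T_{\textup{CMSD}}$ time, by definition of the subroutine. Step~4 iterates over $i \in [Y]$; for each $i$, it evaluates $f(i)=\ev''_i$ (efficient by the CMSD specification), computes $\ev'_i = \sv' - \Hm'\ev''_i$, and checks whether $\wt{\ev'_i}=w-p$. Each such operation costs $\textup{poly}(n)$, so step~4 takes $\textup{poly}(n)\cdot Y$. Summing, one pass costs $\textup{poly}(n) + T_{\textup{CMSD}} + \textup{poly}(n)\cdot Y$.

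For the success probability of one pass: by the CMSD guarantee, exactly $Z$ of the $Y$ indices $i$ satisfy $\Hm''\ev''_i=\sv''$ and $\wt{\ev''_i}=p$; step~4 succeeds as soon as one such $i$ also satisfies $\wt{\ev'_i}=w-p$, which by Lemma~\ref{Lemma:Probability1} occurs with probability $P_1$ for any fixed $i$. Hence the expected number of successful indices in one pass is $Z\cdot P_1$. Treating these $Z$ trials as essentially independent Bernoulli variables of success probability $P_1$ --- the standard ISD heuristic, supported by the fact that the $\ev''_i$ are pairwise distinct and the random permutation $\pi$ spreads the resulting $\ev'_i$'s across $\F_q^{n-k-\ell}$ --- the probability that at least one succeeds is $\Theta(\min\{1, Z\cdot P_1\})$, so the expected number of outer passes is $\max\{1, 1/(P_1 Z)\}$.

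Multiplying the per-pass cost by the expected number of passes gives the claimed bound. The only non-routine point is the $\Omega(\min\{1,ZP_1\})$ lower bound on the per-pass success probability when $ZP_1$ is at least a constant: the upper bound $\min\{1,ZP_1\}$ follows from a union bound, while the matching lower bound relies on the independence heuristic above (or, equivalently, on a second-moment/Chebyshev estimate exploiting that the $\ev''_i$ are distinct). Everything else reduces to Lemma~\ref{Lemma:Probability1} and routine accounting of per-step costs.
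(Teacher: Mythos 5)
Your proof matches the paper's argument: both decompose $T_{\textup{ISD}}$ into the per-pass cost $\textup{poly}(n) + T_{\textup{CMSD}} + \textup{poly}(n)Y$ multiplied by the expected number of outer passes, which the paper likewise estimates as $O(1/(1-(1-P_1)^Z)) = O(\max\{1,1/(P_1 Z)\})$, implicitly using the same independence assumption on the $Z$ candidate solutions that you flag explicitly as the standard ISD heuristic. You spell out the justification for that heuristic in slightly more detail than the paper, but the approach is identical.
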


\begin{proof}
	Steps $1$ and $2$ take time $\textup{poly}(n)$, step $3$ takes time $T_{\textup{CMSD}}$, and step $4$ takes time $\textup{poly}(n)$ for each $i \in [Y]$, hence the right part of the expression. How many times does the algorithm loop over this process? Step $2$ succeeds with constant probability, and step $4$ finds a solution with probability $1 - (1- P_1)^Z$, so it loops over the steps $O\left(\frac{1}{1 - (1- P_1)^Z}\right) =  O\left(\max\left\{1,\frac{1}{P_1Z}\right\}\right)$ times, hence the result.
\end{proof}

\paragraph{The quantum setting.} Our formulation allows for a simple extension to the quantum setting. We consider the algorithm described earlier with the following two changes: $(1)$ in step $4$, the algorithm uses Grover's search to check whether there is $i$ such that $f(i)$ gives us a solution; (2) for each loop, {\ie }, each time the algorithm starts from step $1$, it finds a solution with probability $p = \Omega(\min\{1,P_1 Z\})$. This loop can be made coherently with a quantum algorithm $\aa$ that does not do intermediate measurements and outputs a solution with probability $p$.  The algorithm then use amplitude amplification to find a solution by repeating the loop $O(\frac{1}{\sqrt{p}})$ times. 
	
\begin{proposition}\label{Proposition:ISDQ}
	Fix parameters $\ell,p,Y,$ and $Z$ of the information set decoding algorithm. The quantum running time of the algorithm, $T^Q_{\textup{ISD}}$, is given as: 
$$ T^Q_{\textup{ISD}} = O\left(\sqrt{\max\left\{\frac{1}{Z P_1},1\right\}} \cdot \left(\textup{poly}(n) + T_{\textup{CMSD}} + \textup{poly}(n)\sqrt{Y}\right)\right),$$
where $P_1$ is the probability from Lemma \ref{Lemma:Probability1}, and $T_{\textup{CMSD}}$ is the running time of step $3$, {\ie }, of solving $\textup{\CMSD}(k+\ell,\ell,p,Y,Z)$.
\end{proposition}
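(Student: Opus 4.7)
The plan is to mirror the proof of Proposition~\ref{Proposition:ISDC} with two quantum modifications: step~4 uses Grover's algorithm instead of sequential enumeration over $[Y]$, and the outer loop over permutations is driven by amplitude amplification rather than classical repetition. The running time of one coherent pass of the loop will play the role that $\textup{poly}(n) + T_{\CMSD} + \textup{poly}(n) Y$ played classically, and the $\max\{1, 1/(P_1 Z)\}$ outer factor gets replaced by its square root.

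For the quantum version of step~4, given the output $f$ of step~3 (which is coherently evaluable thanks to QRAM access to any database it stores), I would define the Boolean function $g : [Y] \to \zo$ with $g(i) = 1$ iff the induced candidate $\ev_i$, built in $\textup{poly}(n)$ time from one evaluation of $f$ using Equation~\eqref{EQ1}, satisfies $\wt{\ev_i} = w$. Grover's algorithm applied to $g$ runs in $O(\textup{poly}(n)\sqrt{Y})$ time and outputs a marked index with constant probability whenever at least one exists, so one coherent pass of steps~1--4 costs $\textup{poly}(n) + T_{\CMSD} + \textup{poly}(n)\sqrt{Y}$.

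Next, I would lower-bound the success probability $p$ of one such pass. Step~2 succeeds with constant probability, and by construction the function $f$ returned in step~3 has exactly $Z$ preimages of $\sv''$ of weight~$p$. By Lemma~\ref{Lemma:Probability1}, each of these gives a full solution with probability $P_1$, independently over the random permutation, so the number $k$ of marked inputs to $g$ satisfies $\Pr[k \ge 1] = 1 - (1 - P_1)^Z = \Omega(\min\{1, P_1 Z\})$. Conditioned on $k \ge 1$, Grover succeeds with constant probability, hence $p = \Omega(\min\{1, P_1 Z\})$. Packaging this pass as a quantum subroutine $\aa$ and applying amplitude amplification boosts the success probability to a constant using $O(1/\sqrt{p}) = O\bigl(\sqrt{\max\{1, 1/(P_1 Z)\}}\bigr)$ coherent repetitions of $\aa$, which multiplied by the per-iteration cost yields exactly the claimed bound.

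The delicate step is ensuring that the whole loop can be packaged as a single quantum subroutine without intermediate measurements, so that amplitude amplification applies: the Grover search inside step~4 must not be terminated by a measurement, so one replaces its final measurement by a coherent uncomputation that leaves a single ``found'' flag qubit, which then serves as the acceptance predicate for the outer amplifier. The remaining non-quantum components of $\aa$ (permutation, partial Gaussian elimination, and the classical $\CMSD$ solver invocation used to prepare $f$) are converted to reversible quantum subroutines via the generic simulation recalled in Section~\ref{Section:AppendixQuantumPreliminaries}, which accounts for the overhead absorbed in the $\textup{poly}(n)$ factors.
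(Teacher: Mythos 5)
Your proposal matches the paper's proof essentially step for step: Grover search inside step~4 replacing the classical enumeration over $[Y]$, the per-pass success probability $p = \Omega(\min\{1, P_1 Z\})$, and amplitude amplification on the outer loop contributing the $O\bigl(\sqrt{\max\{1, 1/(P_1 Z)\}}\bigr)$ factor. You merely spell out a few details the paper leaves implicit (reversible simulation of the classical subroutines, uncomputation of Grover's final measurement into a flag qubit), so the argument is the same as the paper's.
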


\begin{proof}
	Again, Steps $1$ and $2$ take time $\textup{poly}(n)$, and step $3$ takes time $T_{\textup{CMSD}}$. In step $4$, the algorithm runs Grover's search, so this whole step takes time $\textup{poly}(n)O(\sqrt{Y})$. That can be done because the function on input $i$ determines whether $\wt{\ev'_i} = w-p$ runs in polynomial time (since $f$ runs in polynomial time). As we described above, we repeat the loop $O\left(\sqrt{\max\left\{\frac{1}{Z P_1},1\right\}}\right)$ times, which gives the result.
\end{proof}

\paragraph{The full ISD algorithm.} To find the best ISD algorithm for solving $\SD(n,k,w)$, we minimize the running time of the algorithm presented earlier over parameters $p,\ell,Y,$ and $Z$. In many cases, we do not have full control over $Y$ and $Z$, which are predetermined from other values. For instance, in Wagner's algorithm, we present next, there is an extra parameter $a$ (the number of levels) that predetermines $Y$ and $Z$, so we optimize over $p,\ell$, and $a$. 
	
\section{Solving $\textup{\CMSD}$}

This section presents our analysis of the application of Wagner's algorithm \cite{Wag02} to solving $\CMSD(N,m_0N,\omega_{0}N,Y,Z)$\footnote{As Wagner's algorithm is used for solving Generalized Birthday Problem, it can be easily seen that is well suited for solving $\CMSD$ problems, too.}. We first present the list merging procedure, which we utilize throughout the section, and then the two versions of our algorithm: the first one that aims to solve the $\CMSD$ problem using classical algorithms only, and the second one that utilizes both classical and quantum algorithms.
	
Notice here the change of the variables' names when referring to the $\CMSD$ problem. It is introduced so that our statements can be made independently of the previous section. Notice also that the asymptotic values of the algorithms' running times are calculated when $N$ goes to $+\infty$ and that when presenting a proof, we ignore all the polynomial and constant terms. 

\subsection{List Merging}\label{Section:ListMerge}

Let us take $3$ lists of vectors in $\F_q^n$: $L_1, L_2$, and $L$. Take also a set $J \subseteq [n]$ and a random vector $\tv \in \F_q^{|J|}$. The merging of $L_1$ and $L_2$ into $L$ is done using the following algorithm:

\paragraph{List merging algorithm.}
\begin{itemize}
	\item Start from an empty list $L$, and sort the elements of $L_1$ according to the lexicographic order on the $J$ coordinates.
	\item For each vector $\yv \in \F_q^n$ from the list $L_2$, search for elements $\xv \in \F_q^n$ of $L_1$ that satisfy: $\xv_{|J} = \yv_{|J} + \tv_{|J}$, where $\xv_{|J} \eqdef (x_{j})_{j \in J}$, $\yv_{|J} \eqdef (y_{j})_{j \in J}$, and $\tv_{|J} \eqdef (t_{j})_{j \in J}$. For each solution found, add $\xv + \yv$ in $L$ and register the references to $\xv$ and $\yv$.
\end{itemize}

\paragraph{Running time.}

Sorting $L_1$ on $J$ coordinates is done in time $O(\log(|L_1|))$ using dichotomic search. If there are $s_\yv$ solutions for a fixed $\yv$, the algorithm takes $O(s_\yv\log(|L_1|))$ time to find them, and the total size of $L$ is $\sum_{\yv} s_\yv$. Therefore, the algorithm takes time $\widetilde{O}\left(|L_1|\right)$ for the first step, {\ie }, to sort $L_1$, and it takes $\widetilde{O}\left(\max\{|L_2|,\sum_i s_\yv \}\right)$ for the second step. Overall, the algorithm takes time $\widetilde{O}(\max\{|L_1|, |L_2|,|L|\})$.

\paragraph{Expected number of solutions.} If the elements in $L_1$ and $L_2$ are random vectors in $\F_q^n$, there is, on average, $|L| = \frac{|L_1||L_2|}{q^{|J|}}$ elements in the merged list.

\paragraph {List merging operator.} To enable a succinct representation of this procedure in the rest of the text, we define the list merge operator on a set $J$ and random vector $\tv$, denoted as $\bowtie_{J}^\tv$:
$$L = L_1 \bowtie_{J}^\tv L_2 = \{\xv + \yv: \xv \in L_1, \yv \in L_2, \xv_{|J} + \yv_{|J} = \tv_{|J}\}.$$
		
\subsection{First Variant} \label{Section:WagnerVar1}

We present here an approach to solving the $\CMSD$ problem, based on Wagner’s algorithm \cite{Wag02}, which utilizes classical algorithms only and is closely related to the original Wagner's algorithm.\\

We start from relevant definitions. For a number of levels $a$, where $2^a|n$, and for each $i \in [2^a]$, we define:
\begin{multline*}
	\I_i  \eqdef \{\bv \in \F_q^n : \bv = (\zerov^{(i-1) n/2^a},\bv_i,\zerov^{(2^a-i)  n/2^a}) \textrm{ with } \bv_i \in \F_q^{n/2^a} \wedge \wt{\bv_i} = N\omega_0/2^a\},
\end{multline*}
\begin{align*}
	L^f_i &\eqdef  \{\Hm \cdot \bv\}_{\bv \in \I_i},\footnotemark
\end{align*}\footnotetext{From the definitions, it can be easily seen that $|L^f_i| = \frac{s_{\omega_0}}{2^a}$.}where $\Hm$ is a parity check matrix, defined in Section \ref{Section:InformationSetDecodingFramework}.

The sets used for the indexing the lists in the merging procedure (as described in Section \ref{Section:ListMerge}) are chosen so that they form a partition of $[n]$, $\ie$:
$$\forall j,j' \in [a], \quad J_j \subseteq [n], \quad \bigcup_j J_j = [n], \quad J_j \cap J_{j'} = \emptyset, \text{ when } j \neq j'.$$

The random vectors (again, described in Section \ref{Section:ListMerge}) are chosen such that they satisfy the following constraint:
$$\forall i \in [2^a], \quad \forall j \in [a], \quad \tv_j^i \in \F_q^n , \quad \sum_{i} (\tv_j^i)_{|J_j} = \sv_{|J_j},$$
where $\sv_{|J_j}$ refers to the syndrome, from Section \ref{Section:InformationSetDecodingFramework}, indexed by $J_j$.


\paragraph{List creation and merging.} The algorithm starts by constructing $2^a$ lists of the same sizes: $L_i \subseteq L^f_i$, for all $i \in [2^a]$\footnote{There are previous description where  $L_i  = L^f_i$, but the inclusion improves the algorithm efficiency.}. At each algorithm's level, the lists are then taken by pairs, $\{L_{2i - 1},L_{2i}\}$, and merged using the list merging procedure described in the previous subsection. More precisely, at the first level, the pairs are merged on a set $J_1$ and a random vector ${\tv_1^i}$ ($\ie$, $\bowtie_{J_1}^{\tv_1^i}$ is performed). From the $2^{a-1}$ created lists, at the second level, pairs are taken again and merged similarly using the operator $\bowtie_{J_2}^{\tv_2^i}$, for each $i \in [2^{a-2}]$. The same procedure continues up to the top level, where only $2$ lists remain and the list merging is performed using $\bowtie_{J_a}^{\tv_a^1 = \sv_{|J_a}}$. A function $f$, required for the $\CMSD$ problem, is then constructed using the method described in Section \ref{Section:SyndromeDecodingProblems}. 

One can check that the final list created by this algorithm contains solutions to the problem. In particular, elements of top level's list are of the form $\Hm \cdot \bv$, with $wt(\bv) = N\omega_0$. That comes from the property of the weight function we use (see Equation (\ref{Constriant:WeightConstraint})) and the definitions given earlier in this subsection. An example of the algorithm for $a = 3$, $\ie$, three levels algorithm, is presented below. 

\begin{figure} \label{Figure: WagnerVar1}
	\begin{minipage}{\textwidth}
		\includegraphics[width = \textwidth]{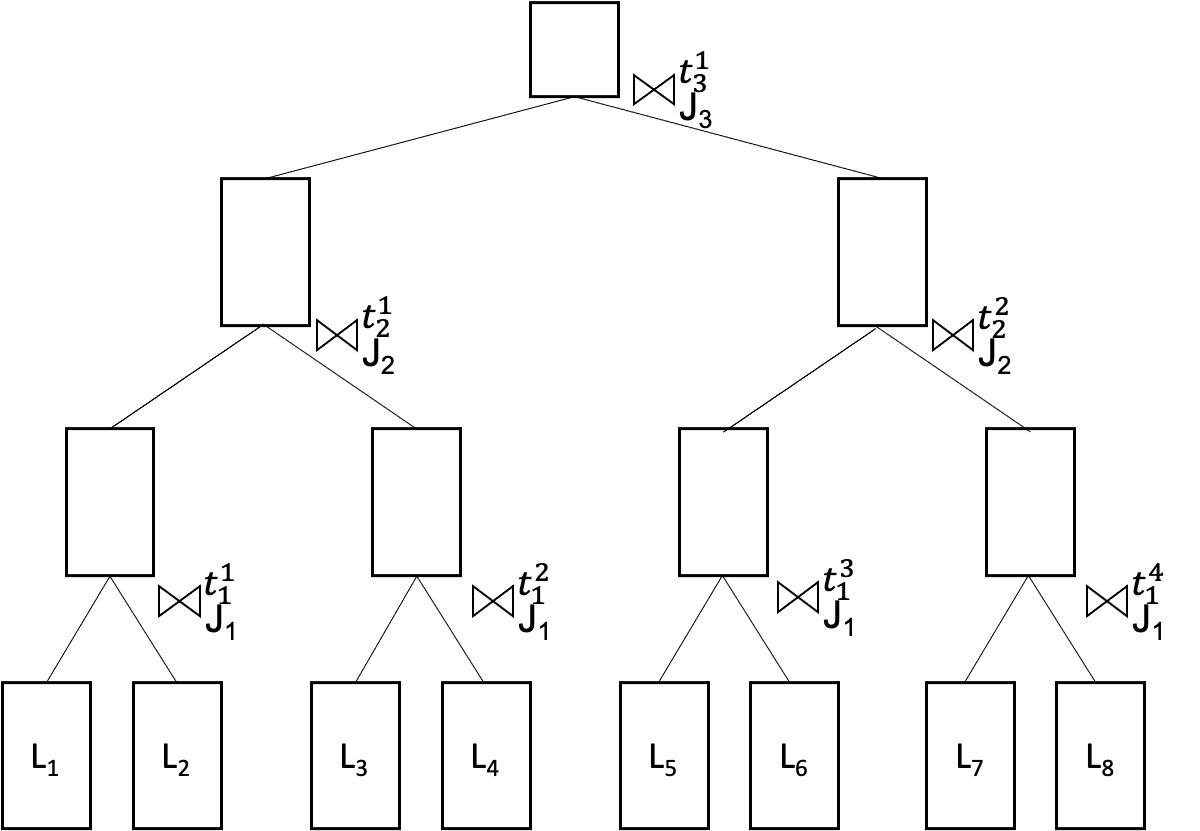}  
	\end{minipage}
	\caption{First variant of Wagner's based algorithm for $a = 3$.}
\end{figure}

\begin{proposition}\label{Proposition:Wagner1} \footnote{Notice that $Y$ and $Z$ in both propositions are determined by $m_0,\omega_{0},$ and $a$ and cannot be chosen arbitrarily.\label{proposition}}
	Fix parameters $m_0,\omega_{0}$, as well as a number of levels, $a$. 
	Let $s_{\omega_0} = \lim_{n \rightarrow \infty} \frac{1}{n}\log_q(S^n_{n\omega_{0}})$, $u = \min\{\frac{s_{\omega_0}}{2^a},m_0/a\}$, and $x = m _0- (a-1)u$.
	The first variant of the algorithm on $a$ levels solves the $\textup{\CMSD}(N,m_0N,\omega_{0}N,Y,Z )$ problem  in time $T_{\textup{CMSD}}$, where 
		\begin{align*}
			Z  = q^{N(2u - x + o(1))}, \	T_{\textup{CMSD}} = q^{N(u + o(1))}, \ Y = T_{\textup{CMSD}},
		\end{align*}
		and the $o(1)$ hides an expression that goes to $0$ as $N$ goes to $+ \infty$.
	\end{proposition}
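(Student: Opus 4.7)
The plan is to run Wagner's tournament as a perfectly balanced binary tree of depth $a$, tuning the initial list sizes and the partition sizes $|J_j|$ so that every intermediate list has size approximately $q^{Nu}$. Since each merge costs $\wO{\max\{|L_1|,|L_2|,|L|\}}$ by Section~\ref{Section:ListMerge} and there are only $2^a - 1 = O(1)$ of them, this alone yields $T_{\CMSD} = q^{N(u + o(1))}$; both $Y$ and $Z$ will then be read off the top of the tree.

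First I would size the initial lists. By definition of $s_{\omega_0}$ one has $|L_i^f| = S^{N/2^a}_{N\omega_0/2^a} = q^{N s_{\omega_0}/2^a\,(1+o(1))}$, and I would take $L_i \subseteq L_i^f$ of size $q^{Nu}$, which is feasible precisely when $u \leq s_{\omega_0}/2^a$; this accounts for the first term of the $\min$. Next I would balance the tree: by the expected-size formula of Section~\ref{Section:ListMerge}, merging two lists of size $q^{Nu}$ on $|J_j|$ coordinates produces on average $q^{2Nu - |J_j|}$ elements, so setting $|J_j| = Nu$ for $j < a$ preserves the list size through the first $a-1$ levels. Since the $J_j$'s partition $[m_0 N]$, this forces $|J_a| = Nx$ with $x = m_0 - (a-1)u$, and the top merge averages $q^{N(2u-x)}$ elements. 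The remaining bound $u \leq m_0/a$ comes from minimising $\max(q^{Nu},\, q^{N(2u-x)})$: for $u > m_0/a$ one has $2u - x > u$, the top merge dominates, and the total cost strictly exceeds $q^{Nm_0/a}$. Hence $u = \min\{s_{\omega_0}/2^a,\, m_0/a\}$ is optimal.

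Correctness and extraction of $Z$ and $Y$ are then routine. Every element $\bv$ of the final list is of the form $\bv = \bv_1 + \dots + \bv_{2^a}$ with $\bv_i \in \I_i$. The disjointness of the supporting blocks combined with the additivity~\eqref{Constriant:WeightConstraint} gives $\wt{\bv} = N\omega_0$, and the levelwise choice $\sum_i (\tv_j^i)_{|J_j} = \sv_{|J_j}$ combined with $\bigcup_j J_j = [m_0 N]$ gives $\Hm\bv = \sv$; the final list therefore contains $Z = q^{N(2u - x + o(1))}$ genuine $\CMSD$ solutions. For the function $f$ I would fix one of the two penultimate lists (each of size $q^{Nu}$) as the indexing list and define $f(i)$ to return the full vector $\bv \in \F_q^N$ obtained from the $i$-th element together with its $J_a$-matching partner in the QRAM-stored, sorted other penultimate list (or $\bot$ if no partner exists); this yields $Y = q^{Nu} = T_{\CMSD}$ with $Z$ valid preimages.

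The main obstacle I expect is promoting \emph{expected} list sizes to actual ones uniformly across all $a$ levels: the formula $|L_1 \bowtie_J^\tv L_2| \approx |L_1||L_2|/q^{|J|}$ assumes near-uniform $J$-projections, and this must be propagated through the tree despite the dependencies induced by the earlier merges. The standard remedy is to draw the shifts $\tv_j^i$ independently at each level so that, conditionally on the previous merges, the $J_j$-projections remain close to uniform on their cosets; this keeps the average-size formulas accurate up to $q^{o(N)}$ multiplicative factors, which is absorbed by the $o(1)$ in the claimed exponents.
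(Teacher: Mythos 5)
Your proposal follows essentially the same route as the paper's proof: pick random subsets $L_i \subseteq L_i^f$ of size $q^{Nu}$ (feasible when $u \le s_{\omega_0}/2^a$), set $|J_j| = Nu$ for $j < a$ and $|J_a| = Nx$ with $x = m_0 - (a-1)u$ so the $J_j$'s partition the $m_0 N$ syndrome coordinates, propagate list sizes through the tree to get intermediate lists of size $q^{Nu}$ and a top list of size $q^{N(2u-x)}$, and then observe that $u \le m_0/a$ forces $x \ge u$ so the top merge does not dominate and $T_{\CMSD} = q^{Nu}$. Your on-the-fly construction of $f$ by indexing a penultimate list and searching for its $J_a$-partner is a cosmetic variant of the paper's ``$f(i) = \ev_i$'' enumeration of the final list, and your closing remark on promoting expected list sizes to actual ones via fresh shifts at each level is a standard but welcome clarification that the paper leaves implicit.
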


\begin{proof}
Let us take all bottom list $L_i, \dots, L_{2^a}$, to be random subsets of size $q^{N{u}}$ of $L_i^f, \dots, L_{2^a}^f$, respecitvely
\footnote{Notice that $\lim_{n \rightarrow \infty} \frac{1}{N} \log_q|L_i^f| = \lim_{N \rightarrow \infty} \frac{1}{N} \log_qS^{N/2^a}_{N\omega_0/2^a} = \frac{1}{2^a} s_{\omega_0},$ so we can choose asymptotically any $u \le \frac{s_{\omega_0}}{2^a}$.}. Without loss of generality, we also choose $|J_j| = u$, for $j \in [2^{a-1}]$, and $|J_a| = x$. We thus have that $Y =q^{N{u}}$. Furthermore, from the merging algorithm, described earlier in this subsection, we know that all the lists up to the top level are of size $q^{Nu}$, and the list at the top level is of size $q^{N(2u - x)}$. As all the elements in the final list are solutions to the original problem, we expect $Z = q^{N(2u - x)}$ solutions, on average. All the list mergings take time $q^{Nu}$, except the last one that takes time $q^{N(2u - x)}$, hence $T_{CMSD} = \max(q^{Nu}, q^{N(2u - x)})$. From the proposition, we know that $u \leq m_0/a$ and $x = m_0 - (a-1)u$, which implies $x \geq m_0/a \geq u$, and thus $T_{CMSD} = q^{Nu}$. Therefore, we have an algorithm that finds $Z = q^{N(2u - x)}$  solutions in time $T_{CMSD} = q^{Nu}$, and for $Y = T_{CMSD} = q^{Nu}$.
\end{proof}
	
\subsection{Second Variant}

Starting from the original Wagner's algorithm \cite{Wag02}, we derive a quantum version of it and utilize it as part of an algorithm that solves the $\CMSD$ problem. Our results are presented in the rest of the section.\\

We start from relevant definitions. For a number of levels $a$, where $2^{a+1}|n$ and, for each $i \in [2^a - 1]$, we define:
\begin{multline*} 
	\I_i  \eqdef \{\bv_i \in \F_q^n : \bv_i = (\zerov^{(i-1) n/(2^a + 1)}, \widetilde{\bv_i},\zerov^{((2^a + 1)-i)  n/(2^a + 1)}) \\ \textrm{ with } \widetilde{\bv_i} \in \F_q^{n/(2^a + 1)} \wedge \wt{\widetilde{\bv_i}} = N\omega_0/(2^a + 1)\},
\end{multline*}
\begin{align*}
	L^f_i &\eqdef  \{\Hm \cdot \bv_i\}_{\bv_i \in \I_i}.
\end{align*}

For $i = 2^a$, we let:
\begin{multline*} 
	\I_{2^a}  \eqdef \{\bv_{2^a} \in \F_q^n : \bv_{2^a} = (\zerov^{(2^a-1) n/(2^a + 1)}, \widetilde{\bv_{2^a}}) \\ \textrm{ with } \widetilde{\bv_{2^a}} \in \F_q^{2n/(2^a + 1)} \wedge \wt{\widetilde{\bv_{2^a}}} = 2N\omega_0/(2^a + 1)\},
\end{multline*}
\begin{align*}
	L^f_{2^a} &\eqdef  \{\Hm \cdot \bv_{2^a}\}_{\bv_{2^a} \in \I_{2^a}},\footnotemark
\end{align*} \footnotetext{From the definitions, it can be easily seen that  $|L^f_i| = \frac{s_{\omega_0}}{2^a + 1}$, for all $i \in [2^a - 1]$, and $|L^f_{2^a}| = \frac{2s_{\omega_0}}{2^a + 1}$, for $i = a$.}

In both cases, $\Hm$ is a parity check matrix, which is defined in Section \ref{Section:InformationSetDecodingFramework}.

Like in the first variant of the algorithm, the indexing sets, $J_1, \dots, J_a$, are chosen so that they form a partition of $[n]$. The random vectors, $\tv_j^i \in \F_q^n$, for all $i \in [2^a]$ and all $j \in [a]$, also satisfy the same constraints as in the first variant (for more details, see Section \ref{Section:WagnerVar1}).

In this variant, all the bottom lists, $L_1, \dots, L_{2^a-1}$, are of the same sizes, except the rightmost one,  $L_{2^a}$, which is quadratically larger than the others. We thus change our definitions of $L^f_i$ accordingly (see definitions above). In contrast to the first variant, the algorithm does not create the rightmost list. It computes and sorts the other lists in lexicographical order on the indices of corresponding $J_j$, for all $j \in [a]$ and, instead of creating the last list, it evaluates a function that describes the list, and then finds a corresponding element (if one exists) in the top list using an efficient (quantum) routine. For the rest of the lists, the algorithm use the same merging method as in the first variant (see Section \ref{Section:WagnerVar1}). An example of the algorithm on  three levels is presented below. 

Let us now construct the function $f$ as it is required for the $\CMSD$ problem. First, let $\yv^1_{2^a},\dots,\yv^Y_{2^a}$ be the elements of $L_{2^a}$, $\ie$, the elements of the bottom right list. For a fixed $k$, we aim to find $\yv'_1,\dots,\yv'_{2^a - 1}$ that satisfy the following: for $\forall i \in [2^{a} - 1]$, $\yv'_i \in L_i$ and $\sum_i \yv'_i + \yv^k_{2^a} = \sv$. If they exist, for each $i$, we find the associated $\bv_i$ (from the definition of $\I_i$ above) such that $\Hm \bv_i = \yv'_i$ and $\Hm \bv_{2^a} = \yv^k_{2^a}$. If there are several such combinations, we take the first one according to the lexicographical order. Finally, let us take $\ev_k = \sum_i \bv_i$, so that we have $\Hm \ev_k = \sv$. We then define $f$ as follows:
$$ f(k) = \left\{\begin{array}{l}
	\ev_k \textrm{, if such a vector exists, } \\
	\zerov \textrm{, otherwise. }
\end{array}\right.$$


The function $f$ then can be described as follows. On an input $k$, $f$ takes  $\yv^k_{2^a}$, from the list $L_{2^a}$, and checks if it can be summed with $\yv'_{2^a-i}$ from the left neighbouring list, $L_{2^a-i}$, so that they appear in the solution sum. Again, if we have several such combinations, we take any one of them, for example, the first one in lexicographical order. The function repeats that at each level until it fails (in which case it outputs $\zerov$), or it arrives to the top list, where it outputs the corresponding $\ev_k$.


\begin{proposition}\label{Proposition:Wagner2}\footref{proposition}
	Fix parameters $m_0,\omega_{0}$, as well as a number of levels, $a$. 
	Let $s_{\omega_0} = \lim_{n \rightarrow \infty} \frac{1}{n}\log_q(S^n_{n\omega_{0}})$, $u' = \min\{\frac{s_{\omega_0}}{2^a + 1},m_0/a\}$, and $x = m_0 - (a-1)u$.
		The second variant of the algorithm on $a$ levels solves the $\textup{\CMSD}(N,m_0N,\omega_{0}N,Y,Z )$ problem  in time $T_{\textup{CMSD}}$, where 
		\begin{align*}
			Z = q^{N(3u' - x + o(1))}, \ T_{\textup{CMSD}} = q^{N(u' + o(1))}, \ Y = q^{N(2u' + o(1))},
		\end{align*}
		and the $o(1)$ hides an expression that goes to $0$ as $N$ goes to $+ \infty$.
\end{proposition}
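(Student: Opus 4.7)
The proof follows the blueprint of Proposition~\ref{Proposition:Wagner1} but has to accommodate two asymmetries: the rightmost bottom list $L_{2^a}$ is quadratically larger than the others, and the subtree of the Wagner tree containing $L_{2^a}$ is never materialized and is instead described implicitly by $f$. By the definitions of $\I_i$ and $L^f_i$ one has $\frac{1}{N}\log_q|L^f_i|\to \frac{s_{\omega_0}}{2^a+1}$ for $i\in[2^a-1]$ and $\frac{1}{N}\log_q|L^f_{2^a}|\to \frac{2s_{\omega_0}}{2^a+1}$, so the constraint $u'\le s_{\omega_0}/(2^a+1)$ lets us take $L_i\subseteq L^f_i$ of size $q^{Nu'}$ for $i\in[2^a-1]$ and $L_{2^a}\subseteq L^f_{2^a}$ of size $q^{2Nu'}$. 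Choosing $|J_j|=u'$ for $j<a$ and $|J_a|=x=m_0-(a-1)u'$ yields a partition of $[n]$, and the condition $u'\le m_0/a$ gives $x\ge u'$, so the top-level merge is contractive.

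The setup phase of the algorithm explicitly builds only the subtrees of the Wagner tree that do not involve $L_{2^a}$. By induction on the level, every such pure merge combines two lists of size $q^{Nu'}$ subject to a constraint on $u'$ coordinates, producing a list of expected size $q^{Nu'}$ which we sort on the appropriate $J_j$ for later lookup. Since there are only polynomially many such lists, the total setup cost is $\widetilde{O}(q^{Nu'})=q^{N(u'+o(1))}$, matching $T_{\CMSD}$. The function $f:[Y]\to\F_q^n$ with $Y=|L_{2^a}|=q^{N(2u'+o(1))}$ is then defined by: on input $k$, compute the $k$-th element $\yv_{2^a}^k$ of $L_{2^a}$ in polynomial time, then walk up the big branch, level by level, using a binary search into the sorted pure-side list paired with the current big-branch subtree to find the lexicographically first partner satisfying the $J_j$-constraint with the current virtual vector, aborting with $f(k)=\zerov$ as soon as no such partner exists; at the top level the search uses the $x$ coordinates of $J_a$. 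Whenever the walk reaches the top, we reconstruct the full tuple $(\bv_1,\dots,\bv_{2^a})$ from stored pointers and return $\ev_k=\sum_i \bv_i$. By construction $\Hm\ev_k=\sv$ (the intermediate constraints telescope to the syndrome equation on the full partition of $[n]$) and $\wt{\ev_k}=N\omega_0$ (the $\bv_i$ have disjoint supports, so the weight function is additive), and each evaluation of $f$ runs in polynomial time.

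The remaining estimate $Z=q^{N(3u'-x+o(1))}$ is the most delicate step. The expected number of successful $k\in[Y]$ coincides with the expected size of the list one would obtain by materializing the full Wagner tree with these list sizes: propagating sizes through the tree shows that every intermediate list on the big branch has expected size $q^{2Nu'}$, every intermediate list on the pure branch has expected size $q^{Nu'}$, and the top merge on $|J_a|=x$ coordinates produces a list of expected size $q^{Nu'}\cdot q^{2Nu'}/q^{Nx}=q^{N(3u'-x)}$. The subtlety is that the ``first match'' tie-breaking at each intermediate level could in principle discard solutions, but the expected number of matches at each such intermediate level is exactly $1$, so a constant fraction of the $k$'s survives every level and the count is preserved up to a factor absorbed in $o(1)$. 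Finally, since the $\I_i$'s have disjoint supports, distinct $k$'s force distinct $\bv_{2^a}^k$'s and hence distinct $\ev_k$'s, so the image of $f$ contains $q^{N(3u'-x+o(1))}$ distinct valid solutions, as required.
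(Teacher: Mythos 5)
Your argument follows the same blueprint as the paper's proof: choose $L_i$ of size $q^{Nu'}$ for $i<2^a$ and $L_{2^a}$ of size $q^{2Nu'}$, set $|J_j|=Nu'$ for $j<a$ and $|J_a|=Nx$, propagate expected list sizes up the Wagner tree to read off $Y=q^{2Nu'}$ and $Z=q^{N(3u'-x)}$, and observe that only the pure side is materialized so that $T_{\CMSD}=q^{N(u'+o(1))}$ while each evaluation of $f$ is polynomial. You are in fact a bit more careful than the paper's terse proof in explicitly addressing the first-match tie-breaking and the distinctness of the solutions returned by $f$, but the construction, the parameter choices, and all the estimates coincide.
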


\begin{proof}
We choose lists $L_1, \dots, L_{2^a-1}$ to be random subsets of size $q^{Nu'}$ of $L_1^f, \dots, L_{2^a - 1}^f$, respectively. We also choose $L_{2^a}$ so that is a random subset of $L _{2^a}^f$ and that it is of size $q^{N2u'}$. Without loss of generality, we choose $J_j$ such that $|J_j| = u'$, for all $j \in [2^{a-1}]$, and $|J_a| = x$.  We then have that $Y = q^{2Nu'}$. After the list merging at each level up to the top one, the new lists are of expected size $q^{Nu'}$, except the rightmost one, at each level, that is of expected size $q^{N2u'}$. At the top level, there is one list of the expected size $q^{Nu'}$ and one of the expected size $q^{N2u'}$. Since $|J_a| = x$, the expected size of the top list, that is  the expected number of solutions to be find by the algorithm, is $Z = q^{N(3u' - x)}$. The time for which the algorithm finds $Z$ solutions is calculated as follows.
Constructing and sorting the lists to compute $f$ take time $q^{N(u'+o(1))}$ (omitting the constant multiplicative term $2^a$), but computing $f$ afterwards take polynomial time, so we finally have $T_{CMSD} = q^{N(u'+o(1))}$. The number of $k$ such that $f(k)$ outputs a good solution is actually the size of $L^{top}$, {\ie }, $q^{N(3u' - x)}$ and, since $f : [Y] \rightarrow \F_q^n$, this proves our proposition. 
\end{proof}

\vspace*{-1cm}
\begin{figure}[H]
	\begin{minipage}{\textwidth}
		\includegraphics[width = \textwidth]{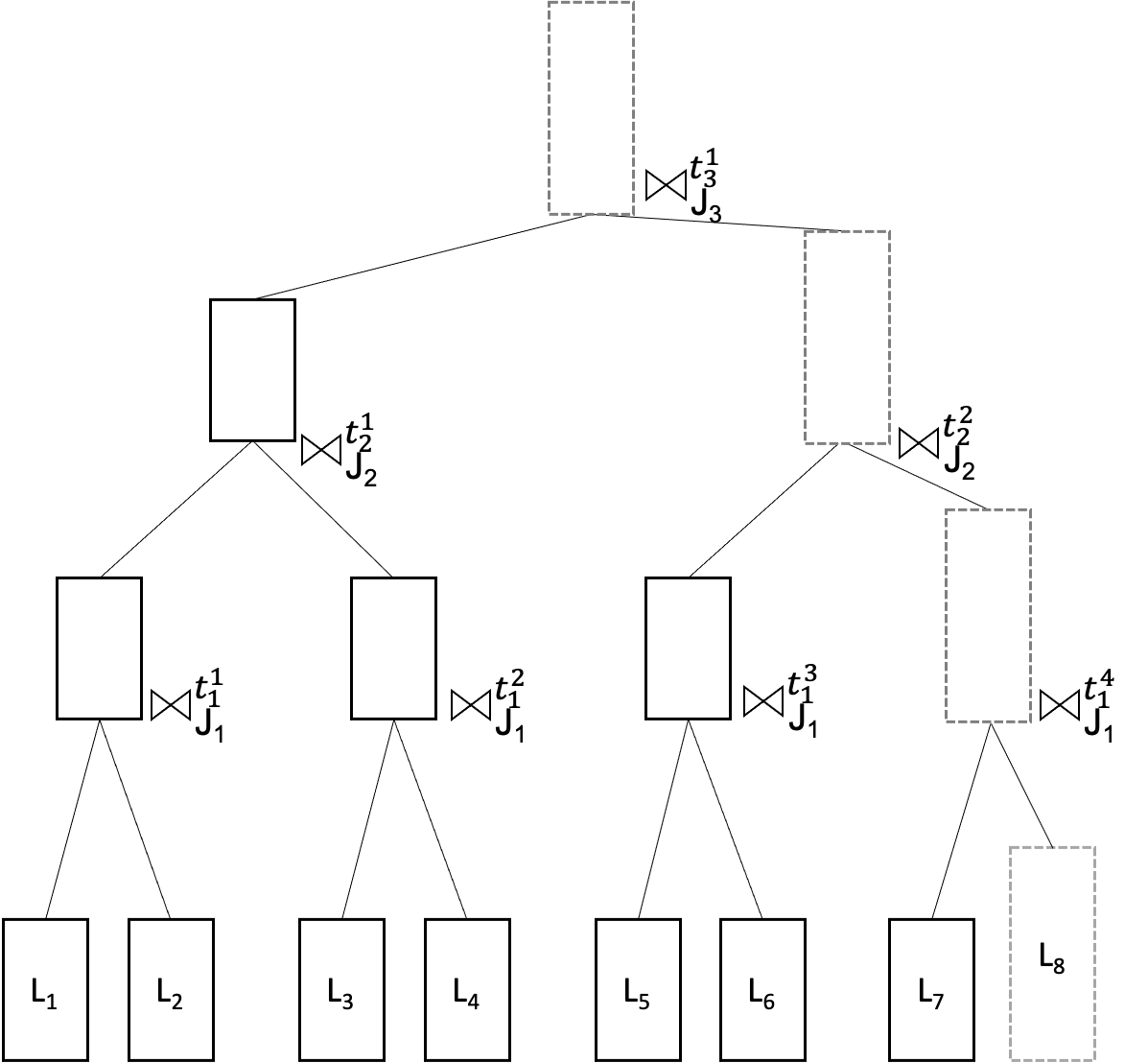}
	\end{minipage}
	\caption{Second variant of Wagner's based algorithm for $a = 3$.}
\end{figure}

\vspace*{-1cm}
\paragraph{Final remarks.}
Our ISD algorithm aims to solve an instance of $\CMSD(k+\ell,\ell,p,Y,Z)$. That means we can use the above propositions to calculate the asymptotic running time of the algorithm described in section \ref{Section:InformationSetDecodingFramework}. We first define $N = k+\ell, m_0 = \frac{\ell}{k +\ell}, $ and $\omega_{0} = \frac{p}{k+\ell},$ and then do the following: we plug Proposition \ref{Proposition:Wagner1} into Proposition \ref{Proposition:ISDC}, for the classical case, and plug Proposition \ref{Proposition:Wagner2} into Proposition \ref{Proposition:ISDQ}, for the quantum case. We then optimize parameters of our ISD algorithm over $k,\ell$, and $a$ by minimizing the algorithm's running. From the values of $k, \ell$, and $a$, we finally extract $Y$ and $Z$ and obtain the asymptotic running time of the algorithm in both the classical and quantum case.

\section{Computing Surface Area of a Sphere}

We here rely primarily on the combinatorial approach presented in \cite{A84a}. Some of the other methods are shown in more recent papers as, for example, \cite{GS91}, \cite {BB19}, \cite{WKH+21}. We decided to use the approach from \cite {A84a} as it enables us to derive a generic method for calculating the asymptotic value of the sphere surface area independently of the weight function and the alphabet size.

\begin{proposition}\label{Proposition:SphereSurfaceArea}
Fix a parameter $q$, and a weight function $wt'$ satisfying Equation \ref{Constriant:WeightConstraint}. Let the set $C$ be defined as follows: $$C \eqdef \{\cv = (c_{1},\cdots,c_{q}): i \in [q], c_i \in \mathbb{N}, \sum_{i = 1}^{q} c_i = n, \sum_{i = 1}^{q} c_i wt'(i) = w\},$$ where $w \in \mathbb{N}, \quad w \leq n \mathop{\max} \limits_{i \in \{1,\cdots, q\}} wt'(i).$ The sphere surface area, and its corresponding asymptotic value when $n$ gows to $+ \infty$, are given by the following expressions:
\begin{equation}\label{eq:volMultiSet1}
		S_{w}^{n} = \sum_{\vec{c}\in C} \binom{n}{\vec{c}}\footnote{$\binom{n}{\vec{c}}$ denotes a multinomial coefficient.}.
\end{equation}
\begin{equation}\label{eq:volMultiSetAsym}
		s_{\omega}  = \mathop{\lim}\limits_{n\to +\infty} \max_{\textbf{c} \in C} \bigg(\sum_{i=1}^{q} - \frac{c_i}{n} \log_q{\frac{c_i}{n}}\bigg).
\end{equation}
\end{proposition}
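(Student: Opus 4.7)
The plan is to prove the exact identity first, then to deduce the asymptotic expression via Stirling's approximation.

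\textbf{Step 1 — Exact formula.} I classify each vector $\ev \in \F_q^n$ by its \emph{type}, the tuple $\vec{c}(\ev) = (c_1,\dots,c_q)$ in which $c_i$ counts the number of coordinates of $\ev$ equal to the $i$-th element of $\F_q$ under a fixed enumeration (with the first label corresponding to $0$, so that the constraint $wt'(0)=0$ in \eqref{Constriant:WeightConstraint} is consistent with the definition of $C$). By the additivity of $wt$ in \eqref{Constriant:WeightConstraint}, $\wt{\ev} = \sum_i c_i\, wt'(i)$, so $\wt{\ev} = w$ if and only if $\vec{c}(\ev) \in C$. The sphere $\{\ev \in \F_q^n : \wt{\ev}=w\}$ therefore decomposes as the disjoint union, over $\vec{c}\in C$, of the corresponding type classes, and the $\vec{c}$-th part has cardinality $\binom{n}{\vec{c}} = \frac{n!}{c_1!\cdots c_q!}$. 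Summing yields \eqref{eq:volMultiSet1}.

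\textbf{Step 2 — Asymptotic formula.} Stirling's approximation gives, uniformly over compositions $\vec{c}$ of $n$,
$$\frac{1}{n}\log_q \binom{n}{\vec{c}} \;=\; \sum_{i=1}^{q} -\frac{c_i}{n}\log_q \frac{c_i}{n} \;+\; O\!\left(\frac{\log n}{n}\right).$$
Since $|C|\le (n+1)^q$ is polynomial in $n$, one obtains the elementary sandwich
$$\max_{\vec{c}\in C} \binom{n}{\vec{c}} \;\le\; S^n_w \;\le\; (n+1)^q \cdot \max_{\vec{c}\in C} \binom{n}{\vec{c}},$$
so that $\frac{1}{n}\log_q S^n_w$ and $\max_{\vec{c}\in C}\bigl(-\sum_i \tfrac{c_i}{n}\log_q \tfrac{c_i}{n}\bigr)$ differ by $O(\log n / n)$. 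Passing to the limit $n\to\infty$ (with $w/n \to \omega$) identifies $s_\omega$ with the right-hand side of \eqref{eq:volMultiSetAsym}.

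\textbf{Main obstacle.} The chief technical subtlety is that the maximum in \eqref{eq:volMultiSetAsym} is taken over an integer set $C$ that varies with $n$, so one must verify that the integer maxima converge. This amounts to showing that the rescaled maxima $\max_{\vec{c}\in C} H_q(\vec{c}/n)$, where $H_q(\vec{p}) = -\sum p_i \log_q p_i$ is the $q$-ary entropy, converge to the continuous maximum of $H_q$ over the polytope $\{\vec{p}\in\mathbb{R}_+^q : \sum_i p_i = 1,\ \sum_i p_i\, wt'(i) = \omega\}$. This follows from the uniform continuity of $H_q$ on the simplex combined with the fact that the integer compositions of $n$ form a $\tfrac{1}{n}$-net. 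As a byproduct, the limit depends only on $\omega = w/n$, confirming that $s_\omega$ is well-defined, and exhibiting the computation as a convex optimization problem over a polytope, which is precisely the generic framework announced in the introduction.
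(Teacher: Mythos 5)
Your proof takes essentially the same route as the paper: partition the sphere by the type (composition) $\vec{c}$ of each vector, count each type class by a multinomial coefficient, sandwich $S^n_w$ between the largest term and a polynomial-in-$n$ multiple of it (you use $(n+1)^q$, the paper uses $\binom{n+q-1}{q-1}$), and apply Stirling's approximation after taking $\frac{1}{n}\log_q$. Your ``Main obstacle'' paragraph spells out the convergence of the discrete maxima to the continuous optimum over the constraint polytope, a point the paper leaves implicit, but the underlying argument is the same.
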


\begin{proof}

Let us first take a multiset of size $n$ where elements are taken from $[q]$, and each element is repeated $c_i$ times, for each $i \in [q]$. The number of permutations of such a multiset is given by the multinomial coefficient, defined as ${n \choose {c_{1}, \dots, c_{q}}} \eqdef \frac{n!}{c_1!\dotsc_{c_q}!}.$ This number corresponds to the number of vectors consisting of $c_1$ ones, $c_2$ twos, ..., $c_q$ values of $q$. By the definition of the set $C$, and the sphere surface area, we thus have  $S_{w}^{n} = \sum_{\vec{c}\in C} \binom{n}{\vec{c}}.$

Given the classical combinatorial result for the number of multinomial coefficients for a fixed $n$ and $q$, the size of a set $C$, and thus the number of the elements in the sum, is upper bounded by $\binom{n+q-1}{q-1}$. The upper and lower bounds of $S_{w}^{n}$ are then given by $\mathop{\max}\limits_{\vec{c}\in C} \binom{n}{\vec{c}} \leq S_{w}^{n} \leq \binom{n+q-1}{q-1}\mathop{\max}\limits_{\vec{c}\in C} \binom{n}{\vec{c}}.$

Following the same line of reasoning as in \cite{A84a}, $\ie$, by taking $log_q$ of each part of the equation above, multiplying them by $\frac{1}{n}$, where ${n\to +\infty}$, and using Stirling's approximation we finally obtain: $s_{\omega}  = \mathop{\lim}\limits_{n\to +\infty} \max_{\textbf{c} \in C} \bigg(\sum_{i=1}^{q} - \frac{c_i}{n} \log_q{\frac{c_i}{n}}\bigg).$
\end{proof}

This proposition can be observed as a generalization of the combinatorial approach presented in \cite{A84a} for any weight function and arbitrary alphabet size. Using the same reasoning, we calculate the asymptotic value of the sphere surface area, $s_{\omega}$, by reducing the Expression \ref{eq:volMultiSetAsym} to the following convex optimization problem:

\begin{problem}\label{Problem:ConvexOptProblem}
Let $\boldsymbol{\lambda} = (\lambda_1,..., \lambda_{q})$, and $\lambda_i \in \mathbb{R}_{+}$ for each $i \in [q]$. 	\begin{itemize}
		\item Maximize: $\quad -\sum_{i=1}^{q} \lambda_i \log_q{\lambda_i}$,
		\item Subject to: $\quad \sum_{i=1}^{q} \lambda_i = 1, \quad \sum_{i=1}^{q}\lambda_i wt'(i) = \omega$.
	\end{itemize}
\end{problem}

It can be easily verified that when replacing the optimization variable $\lambda_i$ with $c_i/n$  from $(5)$, the optimization problem remains convex. If we denote by $\tilde{\boldsymbol{\lambda}} = (\tilde{\lambda}_1, ..., \tilde{\lambda}_{q})$ the solution of Problem 3, the asymptotic value of the sphere surface area is calculated as $s_{\omega}  = - \sum_{i=1}^{q} \tilde{\lambda}_i \log_q{\tilde{\lambda}_i}.$ Notice here that we do not  compute only the surface areas but also the typical weight pattern of words of Lee weight $w$, {\ie } the $\cv \in C$ that maximizes the quantity in Equation \ref{eq:volMultiSetAsym}. This is necessary if we want to use this problem in Stern's signature scheme. \\

It can be shown that Problem 3 belongs to the subclass of the convex optimization problems, namely the class of conic optimization problems \cite{Boyd}. As such, it is susceptible to solving via MOSEK solver \cite{mosek}, so we utilize MOSEK as a primary computational tool. Nevertheless, to be solved via MOSEK, Problem 3 needs to be transformed so that it aligns with the standard form of conic optimization problems, as presented in the following problem: 

\begin{problem}
	Let $\vec{\lambda}\eqdef (\lambda_1,..., \lambda_{q})\in \mathbb{R}_{+}^{q}$ and $\vec{\tau} \eqdef (\tau_1,..., \tau_{q})\in \mathbb{R}_{+}^{q}$. 
	\begin{itemize}
		\item Maximize: $\quad \sum_{i=1}^{q} \tau_i$,
		\item Subject to: $\quad \sum_{i=1}^{q} \lambda_i = 1, \quad \sum_{i=1}^{q}\lambda_i \; wt'(i) =\omega, \quad (1,\vec{\lambda},\vec{\tau}) \in K_{exp}$.
	\end{itemize}
\end{problem}

where the constraint $(1,\vec{\lambda},\vec{\tau}) \in K_{exp}$ means that $\tau_i \leq - \lambda_i \log_q \lambda_i$, for each $i \in [q]$.\footnote{The notation $K_{exp}$ comes from the MOSEK optimizer\cite{mosek} and represents the exponential convex cone.} It can be easily verified that Problem 3 and Problem 4 are equivalent, hence finding a solution of either of the two yields the asymptotic value of the sphere surface area.

\section{Results}

We use our framework to compare $\SD$ with the Hamming and Lee weight. For $q = 2$ and $q = 3$, the weight functions are the same by their definitions. For $q > 3$, however, our numerical results show that the asymptotic complexities of the problem differ in these two cases and that the problem is indeed harder in the Lee weight case. We present here the comparison of the complexities of our classical ISD algorithm in the Lee and Hamming weight setting and in the parameter range that is interesting from the perspective of the hardest instances of the $\SD$ problem. It can be easily verified that the complexity of the hardest instances of the Lee $\SD$ problem is indeed higher than that of the hardest Hamming instances.

\begin{figure}[H]\label{Figure:plot2}
	\includegraphics[width=\textwidth]{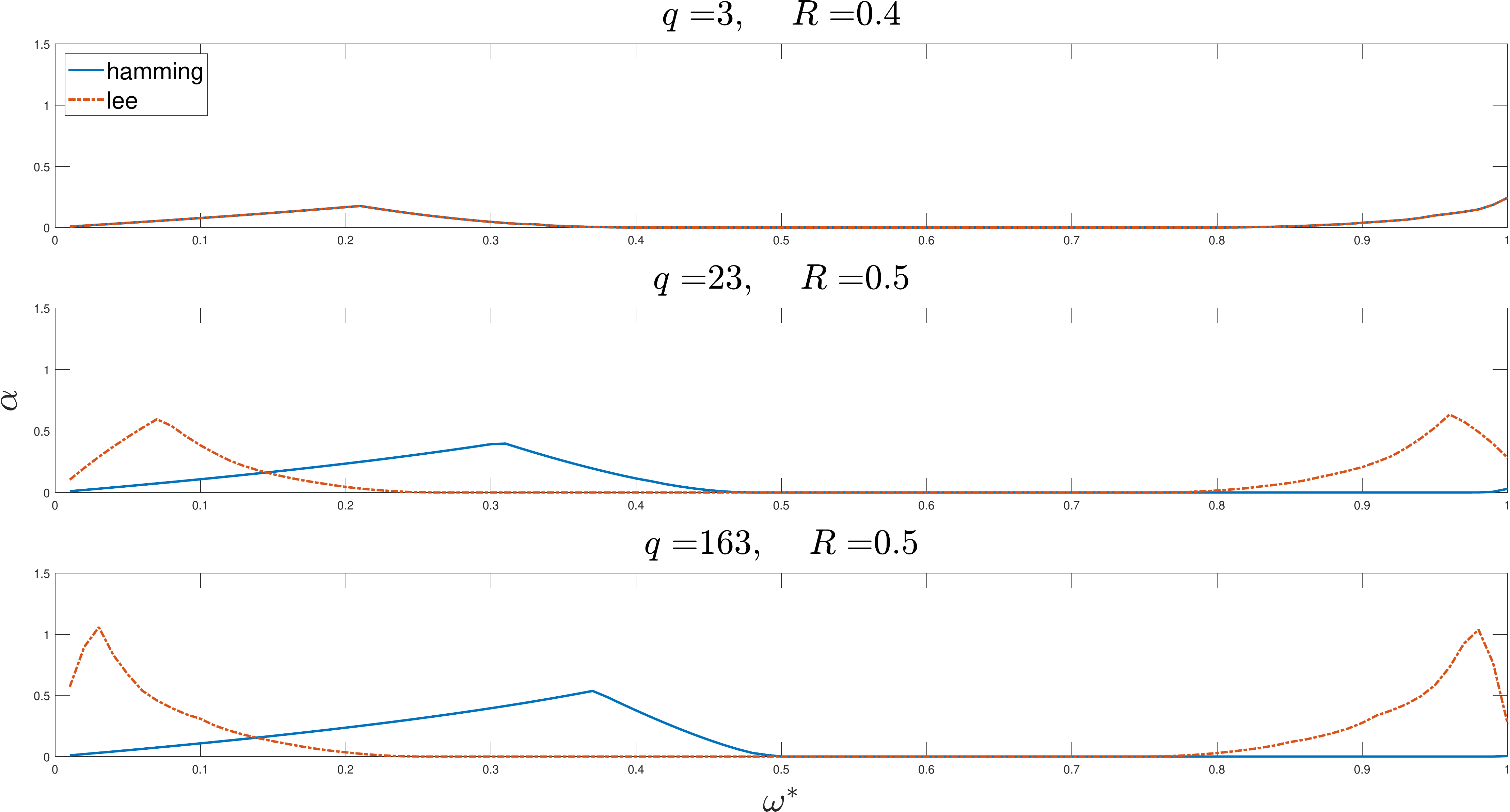}
	\caption{Comparison of the Hamming and Lee $\SD$ problem: for a fixed $q$ and $R$, the exponents $\alpha$ s.t. $Time = 2^{\alpha n}$ are given as a function of $\omega^*$, where $\omega^* = \omega$ in the Hamming weight case, and  $\omega^* = \omega \lfloor q/2 \rfloor $ in the Lee weight case.}	
\end{figure}

In the rest of the analysis, we focus on the $\SD$ problem in Lee weight. The following plot illustrates some of the numerical results we obtain.

\begin{figure}[H]\label{Figure:plot1}
	\includegraphics[width=\textwidth]{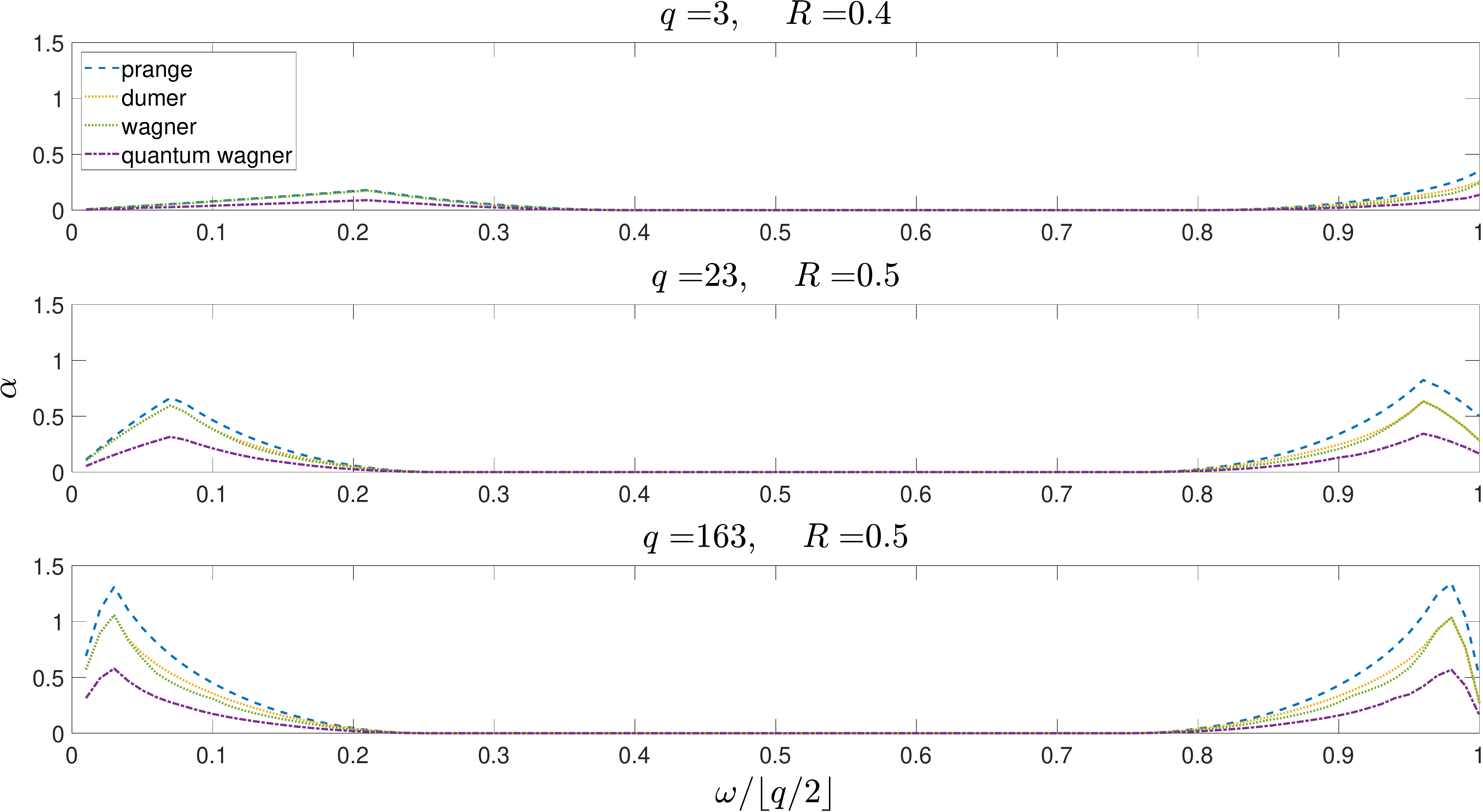}
	\caption{Hardness of the Lee $\SD$ problem: The exponents $\alpha$ of the binary asymptotic complexity, $Time = 2^{\alpha n}$, of four ISD algorithms in Lee weight setting.}
\end{figure}

We observe that for any fixed $q$ and $R$, the asymptotic complexity of our algorithms, as a function of $\omega^*$, has $2$ local maxima: at some values $\omega^*_- \in [0,{x})$ and $\omega^*_+ \in ({x},\lfloor \frac{q}{2} \rfloor]$, with $x = \frac{q^2 - 1}{4q}$\footnote{This value corresponds to the average Lee weight of a vector chosen uniformly at random.}. Moreover, these local maxima always satisfy:
\begin{align*}
	\omega^*_- & = \omega \in [0,x)\textrm{ st. } {s}_{\omega} = (1-R). \footnotemark \\
	\omega^*_+ & = \left\{\begin{tabular}{l}
		$	\omega \in (x,\lfloor \frac{q}{2} \rfloor]\textrm{ st. } {s}_{\omega}  = (1-R)  \footnotemark[\value{footnote}] \textrm{ if such an } \omega \textrm{ exists } $\\
		$	\lfloor \frac{q}{2} \rfloor \textrm{ otherwise.}$
	\end{tabular}
	\right.
\end{align*}

\footnotetext{In that case, we have $S^n_w = q^{n-k}$, which is the case where we have on average $1$ solution to the $\SD$ problem on random inputs $H,\sv$.}

This characterization of the local maxima is particularly useful when aiming to obtain the hardest instances of a problem. Namely, for a fixed $q$, it allows us to find the $R$ that yields the hardest problem and then to check only the $2$ corresponding weights, $\omega_-$ and $\omega_+$, to obtain the hardest instance. That makes our calculations more efficient, which becomes increasingly important as $q$ increases and the convex optimization part of the calculations becomes costly due to the number of constraints in Problem \ref{Problem:ConvexOptProblem}.

It is also important to notice here that many previous papers only consider the case $\omega^*_-$ and miss out on very interesting parameter ranges where, for the lower values of $q$, the problem is typically the hardest. Nevertheless, we also observe that as $q$ increases, the plots become symmetric between small weight and large weight. Therefore, we can expect that for relatively high values of $q$ the difference between the small and large weights would become negligible. However, we cannot verify this claim due to the high computational cost of such verification.

The properties we observe here hold for all ISD algorithms we consider, in both classical and quantum settings. However, it is worth noticing that while these seem to be a generic property of ISD algorithms, there might be other algorithms for which these properties do not hold.

\subsection*{Parameters for which the problem is the hardest.}

To find the hardest instances of the problem, for a given $q$, we rely on the observation about the local maxima, $\omega^*_-$ and $\omega^*_+$, and we optimize over $R$ to obtain the hardest instance. For the sake of simplicity, in Table \ref{table:HardestProblem}, we  present only the results of the analysis of the classical and quantum Wagner's based ISD algorithms and remark that the other two ISD algorithms exhibit similar behaviour.

\begin{table}[H]
\begin{center}
	\begin{tabular}{|| m{1cm} || m{1.2cm} m{1.2cm} m{1.2cm} m{1.2cm} || m{1.2cm} m{1.2cm} m{1.2cm} m{1.2cm}||}
		\hline
		\hline  
	 	 \multirow{2}{1cm} {q}& \multicolumn{4}{c ||}{Classical Wagner ISD complexity} & \multicolumn{4}{c||}{Quantum Wagner ISD complexity} \\
		\cline{2-9}
		&$R$ & $\omega/\lfloor q/2 \rfloor$ &\quad $\alpha$ & \quad $\hat{\alpha}$  &$R$ & $\omega/\lfloor q/2 \rfloor$ &\quad $\alpha$  & \quad $\hat{\alpha}$  \\
		\hline
			3 & 0.370 & 1.000 & 0.269 & 0.170 & 0.369 & 1.000 & 0.148 & 0.093 \\
			5 & 0.572 & 1.000 & 0.357 & 0.154 & 0.569 & 1.000 & 0.206 & 0.089 \\
			13 & 0.480 & 0.957 & 0.522 & 0.141 & 0.501 & 0.962 & 0.283 & 0.076 \\
			43 & 0.454 & 0.954 & 0.794 & 0.146 & 0.472 & 0.959 & 0.429 & 0.079 \\
			163 & 0.442 & 0.967 & 1.117 & 0.152 & 0.464 & 0.971 & 0.607 & 0.083 \\
			331 & 0.438 & 0.974 & 1.291 & 0.154 & 0.464 & 0.978 & 0.703 & 0.084 \\
			\hline
		\end{tabular}
\end{center}
\caption{Hardest instances of Lee $\SD$ problem: the asymptotic complexity exponents, $\alpha$ and $\hat{\alpha}$, correspond to the binary asymptotic complexity, $Time = 2^{\alpha n}$, and $q$-ary asymptotic complexity, $Time = q^{\hat{\alpha} n}$, respectively.}
\label{table:HardestProblem}
\end{table}

It can be readily verified that the complexity of a problem, expressed as $2^{n(\alpha + o(1))}$, becomes higher as $q$ increases. That is expected since the inputs’ size also increases, and we do not get this extra difficulty for free. If, for example, we want to use this problem in Stern's signature scheme, where the signature size essentially scales with the size of $q$-ary vectors of size $n$ or $n-k$, this increase of the input size becomes relevant. Therefore, we propose the scaling where the complexity is of the form $q^{n(\hat{\alpha} + o(1))}$ instead of $2^{n(\alpha + o(1))}$, and we refer to them as $q$-ary asymptotic complexity and binary asymptotic complexity, respectively. Observing $q$-ary complexity, the problem now is the hardest for $q = 3$. Intricately, $q$-ary complexity diminishes and then increases again at some point as $q$ increases. Hence, it would be interesting to calculate the asymptotic $q$-ary complexity when both $q$ and $n$ grows beyond bounds. We can also observe that while for $q = 3$ and $q = 5$ the optimal values were for $\omega^* = 1$, this property does not hold for larger $q$. Nevertheless, it remains in the range close to $1$ (typically, in the range $(0.95, 1]$). We can see, as well, that the hardest instances of the problem occur at the mid-range code rates and, typically, in the range $(0.35, 0.6)$. 

\section{Conclusion}

This paper analyzes different ISD algorithms, both in the classical and quantum regimes, for solving SD problems with varying sizes of alphabet and different weight functions. In the numerical part of the paper, we focused on analyzing the Hamming and Lee weight cases as representative examples of weight functions. 

Our results show that, for a fixed alphabet size $q > 3$, the complexity of the hardest instances of SD problem is higher in the Lee than in the Hamming weight, as well as that the hardest instances occur at high weights. That is true both in the classical and quantum setting. We also show that the problem remains exponentially hard for conveniently chosen parameters both in the classical and quantum setting for the class of the algorithms we consider. Finally, for a fixed alphabet size, we offer a rough estimate of the parameters' ranges for which the $\SD$ problem in Lee weight is typically the hardest.

These results have several implications for designers that want classical and quantum security estimates for their code-based schemes using different weight functions as, for example, for \textsf{WAVE} or other recently proposed schemes\cite{BBCHPSW2020}. For the quantum setting, our algorithms have almost a quadratic improvement over the classical setting, so it is important to update the parameters if we want to achieve quantum security. 

\vspace*{-0.3cm}

\subsubsection*{Acknowledgments.\newline \newline}

\begin{minipage}{0.5\textwidth}
	The authors want to thank Nicolas Sendrier and Anthony Leverrier for helpful discussions. S.E. has received funding from the European Union's Horizon 2020 research and innovation program under the Marie Skłodowska-Curie grant agreement No 754362.
\end{minipage}
\hspace{15mm}
\begin{minipage}{0.35\textwidth}
	    \includegraphics[width=\textwidth]{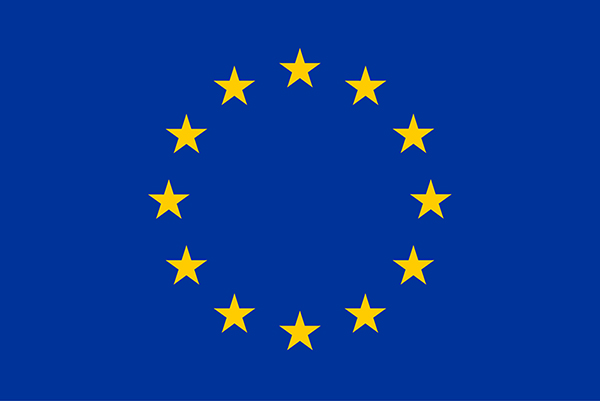}
\end{minipage}
	
\bibliographystyle{alpha}
\bibliography{extra,codecrypto}

\end{document}